\newcommand{\picspace}{{\vspace{-0.1 in}}}          
\begin{document}

\title{Deploying Multiple Antennas on High-speed Trains: Equidistant Strategy v.s. Fixed-Interval  Strategy}

\author{Yang Lu$^{\dagger}$, Ke Xiong$^{\dagger,\natural}$, Pingyi Fan$^{\ast}$,~Zhangdui Zhong$^{\dagger}$\\
\small
$^\dag$School of Computer and Information Technology, Beijing Jiaotong University, Beijing, P. R. China\\
$^\natural$National Mobile Communications Research Laboratory, Southeast University, Nanjing, P. R.China\\
$^\ast$Department of Electronic Engineering, Tsinghua University, Beijing, P. R. China\\}

\maketitle

\begin{abstract}
Deploying multiple antennas on high speed trains is an effective way to enhance the information transmission performance for high speed railway (HSR) wireless communication systems. However, how to efficiently deploy multiple antennas on a train? This problem has not been studied yet. In this paper, we shall investigate efficient antenna deployment strategies for HSR communication systems where two multi-antenna deployment strategies, i.e., the equidistant strategy and the fixed-interval strategy, are considered.  To evaluate the system performance,  mobile service amount and outage time ratio are introduced. Theoretical analysis and  numerical results show that, when the length of the train is not very large, for two-antenna case, by increasing the distance of neighboring antennas
in a reasonable region, the system performance can be enhanced.
It is also shown that the two strategies have much difference performance behavior in terms of instantaneous channel capacity,
 and the fixed-interval strategy may achieve much better performance than the equidistant one in terms of service amount and outage time ratio when the antenna number is much large.
\end{abstract}

\section{Introduction}

High speed railway (HSR) system is developed rapidly all over the world. Especially in China, the length of railway has exceeded 11028km. Since it is a fast, safe and green transportation system, HSR is playing a very important role in people's daily life, and more and more people are spending more and more time on high speed trains (HST) for traveling. In order for passengers to enjoy the information services including entertainment and online video, the demand of wireless communications in trains grows day by day. But due to the high mobility of trains, this communication demand is challengeable to be satisfied.

To provide reliable communications for HSR systems, several problems have to be solved \cite{lter}. Firstly, the handover between the cells is very frequently because of high moving speed. For instances, the cell radius is generally about 2$\sim$3km in typical GSM-R networks \cite{GSR}. If the speed of train is over 500km/h \cite{speed}, the handover period is about 14$\sim$21s, which is quiet short  compared with the general information services. Secondly, the Doppler shift makes the synchronization quite difficult. For example, the maximum Doppler shift of HST is about 945Hz at the speed of 486km/h with the carrier frequency of 2.1 GHz. Thirdly, the HST carriage is enclosed by metal materials which shields the signals transmitted in and out the carriage, termed as penetration loss. In practical HSR systems, the penetration loss even exceeds 25dB. To release these problems above, a great deal of work so far has been done, e.g., \cite{handover}-\cite{two-top}. In \cite{handover}, it proposed a seamless dual-link handover scheme to minimize the communication interruption during handover to handle the first problem. In \cite{Doppler}, a cooperative antenna system was built and an algorithm of Doppler frequency offset (DFO) estimation was proposed. In \cite{two-top}, a two-hop architecture was proposed to overcome the dramatic penetration loss of radio signals transmitted in and out the HST carriages, where all users in the train are considered as a big virtual user, their transmitted signals are first aggregated at a mobile relay deployed on the carriage and then  delivered over the train-ground link between the mobile relay and the ground BSs. Besides, the deployment of BS was investigated in \cite{group} and \cite{service}.

In the past decades, multi-antenna technology has been widely investigated, see e.g., \cite{service},\cite{multi1}, which is able to provide diversity and multiplexing gain to improve the information transmission throughput and reliability of wireless communications. Deploying multiple antennas requires that the antenna spacing must be larger than 1/2 wavelength so that the transmit/receive signals for the multiple antennas have relatively low correlation.  Since a HST is usually with the length about 200m, it provides sufficient space for deploying multiple antennas in linear array. Recently, some works started to study  the multi-antenna system deployed on the train \cite{handover}.

 The deployment of multiple antennas may impact the system performance of HSR communications greatly. As reported in HSR system, the large scale fading of the channels is the dominant factor affecting the system communication performance. When the train is moving, the distances between the multiple antennas deployed on the train and the serving BS located on the railway side change very fast. If the antennas are equipped very closely to each other, they may have very similar distances to the serving BS. While if the antennas are positioned relatively far from each other, they may have very different distances to the serving BS. Thus, different antenna deployment strategies may result in very different performance. To better understand the impacts of the deployment of multiple antennas on HST, several fundamental questions are required to be answered.
\begin{itemize}
  \item How does the deployment of multiple antennas impacts the instantaneous channel capacity and the total service amount of HST?
  \item Besides the instantaneous channel capacity and service amount, how to effectively evaluate the system information transmission performance over the time domain?
  \item What is the efficient multiple-antenna deployment strategy?
\end{itemize}

To answer these questions, we consider two multiple-antenna strategies i.e., the equidistant strategy and the fixed-interval strategy.  Besides the instantaneous channel capacity and service amount, outage time ratio is also introduced as a metric to evaluate the system performance.
  Some theoretical analysis and numerical results show that, for two-antenna case, by increasing the distance of neighboring antennas in a reasonable region, the system performance can be enhanced and  that the
  two strategies have much difference performance behavior in terms of instantaneous channel capacity
   and the fixed-interval strategy  may achieve better performance than the equidistant one in terms of service amount and outage time ratio if the antenna number is relatively large.

The rest of this paper is organized as follows: In Section II, it introduces the system model and two multiple-antenna deployment strategies. In Section III, we define the performance metrics and analyze the two multiple antennas deploying strategies. Some simulation results are provided in Section IV. Finally,  some conclusions are given in Section V.

\section{System Model}

\subsection{Network Model}
Consider a HSR communication scenario as shown in Figure  \ref{systemmodel}, where a HST runs at a constant speed $v$ along a straight track\footnote{For a small time period, the moving speed of the train is usually considered to be stable and unchanged.} and it desires to communicate  with the serving base station (BS) via a two-hop link. The two-hop link means that there is a relay equipped on the train, which is used to help the information delivering between the serving BS and the users in the train. Specifically, in the downlink scenario, the roadside BS first transmits information to the relay and then the relay helps to forward the information to the passengers within the carriages. In the uplink scenario, the passengers in the carriages first connect with the relay and then transmit information to the BS with the help of the relay. Since the service amount of downlink is much larger than that of uplink, in this paper, we only focus on the downlink transmission over the two-hop link.

\subsection{Channel Model}
Suppose the BSs are positioned with equal distance interval $D$ on the parallel line of the railway.
To enhance the system transmission performance, $N$ omnidirectional antennas\footnote{It is known that when the distance between two antennas exceeds half a wavelength, these two antennas can be regarded as irrelevant. Thus, there is a limit of $N$. For example, in LTE system, the carrier frequency is about 2000MHz and the wavelength is about 0.15m. So, the distance between two antennas must be larger than $7.5\times 10^{-2}$m and the maximal $N$ is about 2666.} are equipped on the train for the relay, referred to as train relay station (TRS). Therefore, in the downlink transmission, TRS can receive multiple copies of the signals associated with same information transmitted from the serving BS. Then, the TSR can decode the information from the  BS by using Maximal-Ratio-Combining (MRC). For clarity, we use $B$ and $n$ to represent the BS and the $n-$th antenna respectively, where $1 \le n \le N$.

\begin{figure}
\centering
\includegraphics[width=0.45\textwidth]{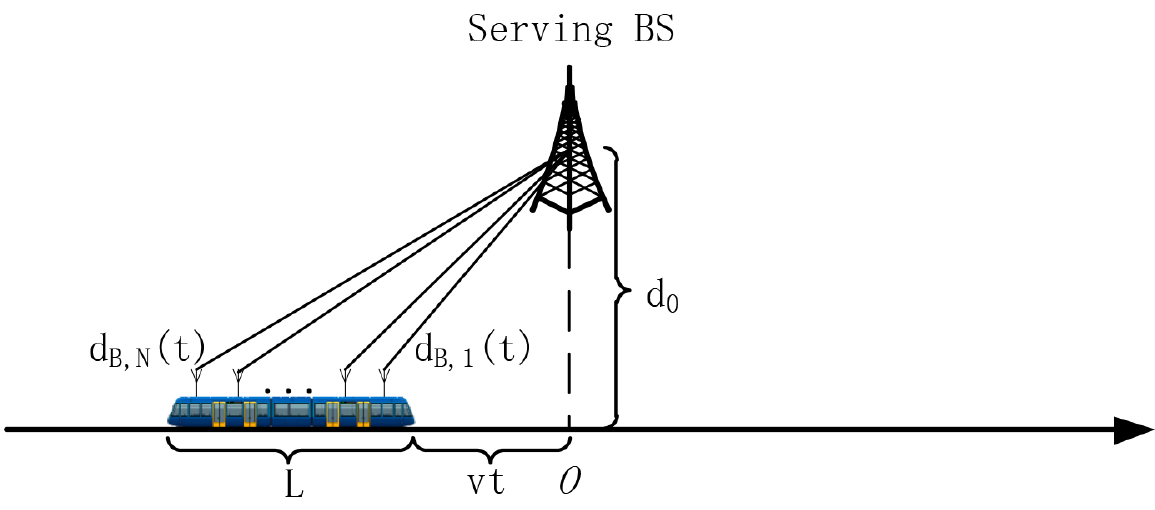}
\caption{System Model}
\label{systemmodel}
\end{figure}\picspace

In HSR communications, the effect of large-scale fading is more important comparing with that of small-scale fading, since in most scenarios, HST is moving in open fields such as the viaduct bridge, which is lack of scatters. In this case, the distance between the HST and the BS {is the dominant factor affecting} the received signals. The path loss can be expressed as $ {P_L}\left(d_{B,n}\right) =  - 10\log \left( {{G_l}{\lambda ^2}/{{\left(4\pi {d_{B,n}}\right)}^2}} \right)$, where $G_l$ is the antenna gain, $\lambda$ is the wavelength of the carrier and ${d_{B,n}}$ is the distance between $n-$th ($n=1,...,N$) antenna of HST and the BS.

\subsection{Two Multiple-Antenna Strategies}
Since the length of a train is often more than 200 meters, it provides a large space to deploy multiple antennas in linear mode and to obtain special diversity. As mentioned previously, large-scale fading is the dominant factor for the HSR communications, which indicates that if one places the antennas at different position, it may have big different large-scale effects, resulting in much difference in system performance. Motivated by this, we present two antenna deployment strategies as follows.

\subsubsection{Equidistant Strategy}

In equidistant deployment strategy, the antennas are positioned evenly on the train with equal distance between any two neighboring antennas, where the equal distance should be large than the half wavelength and set to be $\frac{L}{N-1}$m, where $L$ is the length of the HST,  $N$ is the antenna number deployed on the train.  For example, suppose $L=200$m, if two antennas are deployed according to the equidistant strategy, the distance between the two antennas is 200m, while for three antennas case, the distance between two neighboring antennas is 100m.

Assuming the cross point of the railway and its vertical line through the serving BS is the  original point of the time axis and train moves towards the positive direction of the time axis. The distance between $n-$th antenna and the serving BS at time $t$ is

\begin{equation}
d_{B,n}^{\left( {Equ} \right)}\left( t \right) = \sqrt {{{\left( {vt - \tfrac{L}{{N - 1}}\left( {n - 1} \right)} \right)}^2} + d_0^2}
\end{equation}
where  $v$ is the velocity of HST, $d_0$ is the vertical distance between the track and the serving BS and $t \in \left[ - D/2v,D/2v\right]$, $D$ is the serving coverage parameter along the high speed train track.

Additive white Gaussian noise (AWGN) channel {is assumed, so} the $n-$th antenna's received signal to noise ratio {(SNR) from the BS can be expressed by} $SNR{_{B,n}^{\left( {Equ} \right)}}\left( t \right)={P_t} {P_L}({d_{B,n}^{\left( {Equ} \right)}\left( t \right)})/{N_0}W$. {Therefore, the} instantaneous {achievable information rate per Hz} of the HST with equidistant strategy at time $t$ is
\begin{equation}\label{c_t}
C^{\left( {Equ} \right)}\left( t \right) = \log \left( {1 + \sum\nolimits_{n = 1}^N {SNR{_{B,n}^{\left( {Equ} \right)}}} } \right)
\end{equation}
where  $W$ is the bandwidth of carrier, $ {P_L}({d_{B,n}^{\left( {Equ} \right)}\left( t \right)})$ is the path loss from $n-$th antenna to the serving BS, $N_0$ is the power density of AWGN and $P_t$ is the transmission power of the serving BS.

\subsubsection{Fixed-interval Strategy}

In the fixed-interval deployment strategy, the antennas are divided into two sets with equal amount and the antennas in the first set is placed one by one from the head to the tail of the train while the antennas in the second set is placed one by one from the tail to the head of the train. The distance between any two neighboring antennas in the same set is with a fixed value $\delta$, where $\delta$ is larger than the half wavelength and less than $\frac{L}{N-1}$ m.

Assuming that the $N$ is even, the distance between $n-$th antenna and the serving BS can be given by
\begin{equation}
d_{B,n}^{\left( {Fix} \right)}\left( t \right) = \left\{ \begin{array}{l}
\sqrt {{{\left( {vt - n\delta } \right)}^2} + d_0^2} \left( {n \le \frac{N}{2}} \right)\\
\sqrt {{{\left( {vt - \left( {L - n\delta } \right)} \right)}^2} + d_0^2} \left( {n > \frac{N}{2}} \right)
\end{array} \right.
\end{equation}

The corresponding received $SNR{_{B,n}^{\left( {Fix} \right)}}\left( t \right)={P_t} {P_L}({d_{B,n}^{\left( {Fix} \right)}\left( t \right)})/{N_0}W$ and the instantaneous achievable information rate of HST with fixed-interval strategy at time t is

\begin{equation}\label{c_t}
C^{\left( {Fix} \right)}\left( t \right) = \log \left( {1 + \sum\nolimits_{n = 1}^N {SNR{_{B,n}^{\left( {Fix} \right)}}} } \right)
\end{equation}

\section{PERFORMANCE ANALYSIS}
\subsection{Performance Metrics}
To effectively evaluate the performance of different antenna deployments, besides instantaneous achievable information rate, we introduce the concepts of mobile service\cite{service} and outage time ratio of the HST during the time period $[ - D/2v,D/2v]$.

\subsubsection{Mobile Service}
Service was defined as the integral of the instantaneous channel capacity over the time period $\left[ - D/2v,D/2v\right]$, which is used to describe the information transmission amount of the serving BS during a given time period and can be considered as the upper bounds of transmission\cite{service}.
\begin{equation}\label{C_sum}
{S} = \int_{ - D/2v}^{D/2v} {C\left(t\right)} dt
\end{equation}
where $C\left(t\right)$ is $C^{\left(Equ\right)}\left(t\right)$ or $C^{\left(Fix\right)}\left(t\right)$.

\subsubsection{Outage Time Ratio}

In practical applications, when the instantaneous channel capacity is lower than the user's information rate requirement, the communication outage occurs. Thus,
for a given rate threshold ${C_{\rm th}}$, we define the total time duration that the
the instantaneous channel capacity lower than ${C_{\rm th}}$ to the total time period $T$ as the outage time ratio, which can be calculated by
\begin{equation}
\begin{aligned}
{P_{\rm OTR}} &= \frac{\sum_i(T^{+}_i-T^{-}_i)}{T}
\end{aligned}
\end{equation}
where $T=D/2v-\left(-D/2v\right)=D/v$ and $C\left(t\right)<{C_{\rm th}}$ for $t\in [T^{-}_i,T^{+}_i]$ and $T^{-}_i\geq 0$ and $T^{+}_i\leq T$.

\begin{figure}
\centering
\includegraphics[width=0.495\textwidth]{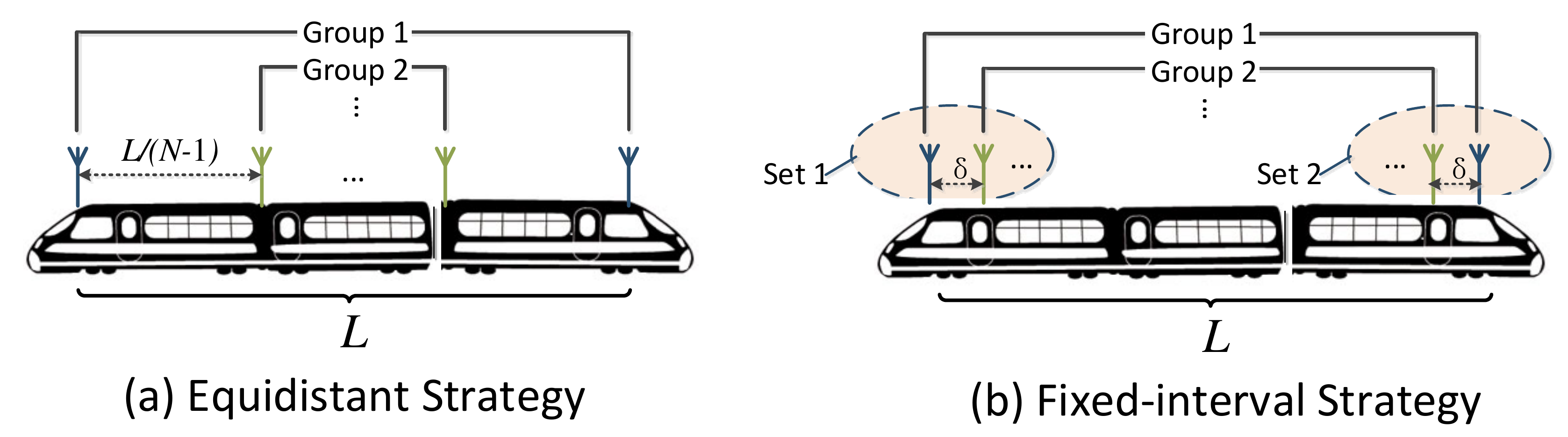}
\caption{The comparison of two multiple-antenna systems: (a) Equidistant strategy; (b) Fixed-interval strategy.}
\label{four_antenna}
\end{figure}\picspace

\subsection{Analysis}
In this subsection, we shall analyze the performance of the two proposed strategies. As it is difficult to directly calculate service amount and outage time ratio for the two strategies with a relatively large $N$, we analyze them as follows.

We first analyze the case of $N=2$.

It is noted that if the fixed interval is equal to the train length $L$, i.e. the two antenna are deployed at the head and tail of the train,
 the fixed-interval strategy is the same with that of the equidistant strategy with distance between two antennas as $L$. That is,  both strategies have the same system performance.
 If the fixed interval is less than the train length $L$, and the distance between two antennas is not the same,  both of the strategies will have different performance.

For $N=2$,  consider the case that the two antenna are deployed at the head and tail of the train, the mobile service can be calculated as Eq.(\ref{C_sum}) \cite{service}. Let $f\left(t,L\right)=1 + SN{R_{B,1}} + SN{R_{B,2}}$. Because $f\left(t,L\right)$ is always $>1$,  $\log\left(f\left(t,L\right)\right)$ is always larger than $0$.  The relation between ${S}$ for two antennas and $L$ can be summarized as follows.
\newtheorem{lemma}{\textbf{Lemma}}
\begin{lemma}
For given $D$ and $v$, $S_{two}$ for two antennas is decreasing function w.r.t. $L$  if $L$  is very large and it has a maximum.
\end{lemma}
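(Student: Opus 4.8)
The plan is to view $S_{two}$ as a function of $L$ alone, holding $D$, $v$, $d_0$ and the constant $A:=P_tG_l\lambda^2/\bigl((4\pi)^2N_0W\bigr)$ fixed (so that $SNR_{B,n}=A/d_{B,n}^2$), and to analyze its derivative. First I would substitute $u=vt$ to get
\begin{equation}
v\,S_{two}(L)=\int_{-D/2}^{D/2}\log\!\left(1+\frac{A}{u^2+d_0^2}+\frac{A}{(u-L)^2+d_0^2}\right)du=:g(L),
\end{equation}
and note that the change of variable $u\mapsto-u$ gives $g(-L)=g(L)$, so $g$ is even in $L$. Since the integrand and all its $L$-derivatives are continuous and bounded for $u$ in the compact interval $[-D/2,D/2]$, differentiating under the integral sign is legitimate and produces
\begin{equation}
g'(L)=\int_{-D/2}^{D/2}\frac{1}{f(u,L)}\cdot\frac{2A\,(u-L)}{\left((u-L)^2+d_0^2\right)^2}\,du,\qquad f(u,L):=1+\frac{A}{u^2+d_0^2}+\frac{A}{(u-L)^2+d_0^2}.
\end{equation}

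For the first assertion of the lemma, observe that if $L>D/2$ then $u-L<0$ for every $u\in[-D/2,D/2]$, so the integrand above is strictly negative pointwise while $f>1>0$; hence $g'(L)<0$ and $S_{two}$ is strictly decreasing on $(D/2,\infty)$, which is the concrete meaning of "$L$ very large" here. (One also checks $g(L)\to\int_{-D/2}^{D/2}\log\!\left(1+A/(u^2+d_0^2)\right)du$ as $L\to\infty$, so $S_{two}$ remains bounded.)

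For the existence of a maximum I would combine this monotonicity with compactness: $g$ extends continuously to $L=0$, so on the compact interval $[0,D/2]$ it attains a maximum at some $L^{\ast}$, and for every $L>D/2$ the previous step gives $g(L)<g(D/2)\le g(L^{\ast})$; hence $L^{\ast}$ is a global maximizer of $S_{two}$. To show this maximizer is nontrivial---consistent with the paper's claim that enlarging the inter-antenna spacing from small values helps---I would use evenness to get $g'(0)=0$ and then evaluate
\begin{equation}
g''(0)=\int_{-D/2}^{D/2}\left[\frac{2A\,(3u^2-d_0^2)}{(u^2+d_0^2)^3\,f_0(u)}-\frac{4A^2u^2}{(u^2+d_0^2)^4\,f_0(u)^2}\right]du,\qquad f_0(u):=1+\frac{2A}{u^2+d_0^2},
\end{equation}
whose integrand is negative for $|u|<d_0/\sqrt3$ but positive, and of the larger order $6A/u^4$, for $|u|\gg d_0$; when $D$ is not too small relative to $d_0$ the positive part dominates, $g''(0)>0$, $L=0$ is a strict local minimum, and $g$ increases on an interval to its right, so $L^{\ast}>0$.

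The step I expect to be the main obstacle is exactly this last one: pinning down the sign of $g''(0)$---equivalently, showing the maximizer is an interior $L^{\ast}>0$ rather than a degenerate boundary point---requires a quantitative balance between the near-origin and far-field parts of that integral and presumably needs a mild hypothesis relating $D$, $d_0$ and $A$, which I would either state explicitly or absorb into the informal clause "the length of the train is not very large." The bare claims---that $S_{two}$ has a maximum and is decreasing for large $L$---follow cleanly from the sign of $g'$ together with an elementary compactness argument.
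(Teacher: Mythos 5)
Your proposal is correct and follows essentially the same route as the paper: differentiate under the integral sign, observe that the integrand of $dS_{two}/dL$ is pointwise negative once $L$ exceeds the range of $vt$, and invoke continuity and boundedness to obtain a maximum. Your version is in fact slightly more careful than the paper's (the explicit compactness argument on $[0,D/2]$ combined with monotonicity beyond it, and the clearly flagged extra discussion of whether the maximizer is interior), but the underlying idea is identical.
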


\begin{proof}
Let $SNR_0 = {{{P_t}{G_l}{\lambda ^2}}}/{{{{\left( {4\pi } \right)}^2}{N_0}B}}$,
\begin{equation}\label{proof1}
\begin{aligned}
\frac{{d{S_{\rm two}}}}{{dL}} &= \frac{{d\int_{ - D/v}^{D/v} {\log \left( {f\left( {t,L} \right)} \right)} dt}}{{dL}}\\
 &= \int_{ - D/v}^{D/v} {\frac{1}{{f\left( {t,L} \right)\ln 2}}\frac{{df\left( {t,L} \right)}}{{dL}}} dt.
\end{aligned}
\end{equation}
Since $f\left(t,L\right)>1$ always holds and
\begin{equation}
\frac{{df\left( {t,L} \right)}}{{dL}} = \frac{{SN{R_0}\left( {2\left( {vt - L} \right)} \right)}}{{{{\left( {{{\left( {L - vt} \right)}^2} + d_0^2} \right)}^2}}},
\end{equation}
it can be inferred that $\left(vt-L\right)<0$ when $L>D$ for each $t\in [-D/v, D/v]$. In this case, the integral function is always negative, which imply that $\frac{{d{S_{\rm two}}}}{{dL}}<0$.  Thus $S_{two}$ for two antennas is decreasing function in $L$.
 On the other hand, $S_{two}$ is a continuous function of $L$ and bounded,  thus  there exists a maximum.
\end{proof}

\begin{lemma}
For given $D$ and $v$,  $P_{\rm OTR}$ with $N=2$ is a first decrease  and then increase function w.r.t. $L$ and

\begin{equation}\label{pout}
\begin{aligned}
{P_{\rm OTR}} = 1-\tfrac{{\sqrt {\frac{{\left( {\beta {L^2} - 4\beta d_0^2 + 4\sqrt {\left( { - {\beta ^2}{L^2}d_0^2 + \beta {L^2} + 1} \right)}  + 4} \right)}}{\beta }} }}{D},
\end{aligned}
\end{equation}
where $\beta  = \frac{{{e^{C_{\rm th}}} - 1}}{{SNR_0}}$.
\end{lemma}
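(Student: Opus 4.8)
The plan is to translate the outage event into an explicit inequality in $t$, solve the associated boundary equation in closed form to obtain formula (10), and then differentiate the resulting expression for $P_{\rm OTR}$ with respect to $L$.

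First I would specialize to $N=2$ with the two antennas at the head and tail, so that, with $SNR_0 = P_t G_l \lambda^2/((4\pi)^2 N_0 B)$, one has $SNR_{B,1}(t)=SNR_0/((vt)^2+d_0^2)$ and $SNR_{B,2}(t)=SNR_0/((vt-L)^2+d_0^2)$. The outage event $C(t)<C_{\rm th}$ is then equivalent to $f(t,L)<e^{C_{\rm th}}$, i.e.
\[
\frac{1}{(vt)^2+d_0^2}+\frac{1}{(vt-L)^2+d_0^2}<\beta ,\qquad \beta=\frac{e^{C_{\rm th}}-1}{SNR_0}.
\]
I would substitute $x=vt-\tfrac{L}{2}$ to exploit the symmetry of the left-hand side about the mid-point of the train; writing $a=\tfrac{L}{2}$, the boundary equation $\frac{1}{(x+a)^2+d_0^2}+\frac{1}{(x-a)^2+d_0^2}=\beta$ clears denominators and, after setting $p=x^2+a^2+d_0^2$, reduces to the quadratic $\beta p^2-(4a^2\beta+2)p+4a^2\beta(a^2+d_0^2)=0$. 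Solving for $p$ and returning to $x$ via $x^2=p-a^2-d_0^2$ produces the two critical values of $x^2$, and the larger one, call it $x_\ast^2$, is the squared half-width of the interval on which $C(t)\ge C_{\rm th}$.

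Next I would argue that, in the operating regime where $\beta<f(\tfrac{L}{2v},L)$ and the train is short relative to the coverage window, the set $\{t: C(t)\ge C_{\rm th}\}$ is the single interval $[\,\tfrac{L}{2v}-\tfrac{x_\ast}{v},\ \tfrac{L}{2v}+\tfrac{x_\ast}{v}\,]$ and it is contained in $[-\tfrac{D}{2v},\tfrac{D}{2v}]$. Hence the outage set inside the coverage window is its complement, of $t$-measure $(D-2x_\ast)/v$, and dividing by $T=D/v$ gives $P_{\rm OTR}=1-2x_\ast/D$; substituting $a=\tfrac{L}{2}$ into $x_\ast^2$ reproduces exactly Eq. (10). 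The delicate point here is the geometry of $\{C(t)\ge C_{\rm th}\}$: once $L$ is large enough that the mid-point value $f(\tfrac{L}{2v},L)=1+2SNR_0/((\tfrac{L}{2})^2+d_0^2)$ falls below $e^{C_{\rm th}}$, the non-outage set splits into one interval around each antenna with a gap in the middle, and one must either verify that the second root of the quadratic is non-admissible over the relevant $L$-range or account for it, so that (10) remains the correct closed form. This case analysis, together with the containment check, is the main obstacle.

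Finally, writing $P_{\rm OTR}(L)=1-\tfrac{1}{D}\sqrt{h(L)}$ with $h(L)=L^2-4d_0^2+\tfrac{4}{\beta}+\tfrac{4}{\beta}\sqrt{\beta L^2(1-\beta d_0^2)+1}$, the sign of $P_{\rm OTR}'$ is opposite to that of $h'$. A direct computation gives
\[
h'(L)=2L\left[\,1+\frac{2(1-\beta d_0^2)}{\sqrt{\beta L^2(1-\beta d_0^2)+1}}\,\right].
\]
I would then show that $h'(L)>0$ for small $L$ while $h'(L)<0$ as $L$ approaches the largest value for which the radicand stays non-negative (which, in the relevant regime $\beta d_0^2>1$, forces the bracket to $-\infty$), and that $h'$ changes sign exactly once, at $\beta L^2=(1-4(\beta d_0^2-1)^2)/(\beta d_0^2-1)$. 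Therefore $h$ first increases and then decreases, so $P_{\rm OTR}$ first decreases and then increases in $L$, which proves the claim; the differentiation step here is routine compared with the geometric bookkeeping of the previous paragraph.
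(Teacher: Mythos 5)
Your derivation of formula (10) is correct and reaches the same closed form as the paper by a cleaner route: the paper differentiates $C(t)$, locates the two peaks $t_1,t_3$ and the midpoint minimum $t_2=L/(2v)$, writes down all four roots $T_1,\dots,T_4$ of $C(t)=C_{\rm th}$, and then does the case analysis on how many of them are real/admissible before keeping $T_1-T_2$; your symmetrization $x=vt-L/2$, $p=x^2+a^2+d_0^2$ collapses the quartic to a genuine quadratic in $p$ and makes the radicand $4(\beta L^2-\beta^2L^2d_0^2+1)$ appear transparently. I checked the quadratic $\beta p^2-(4a^2\beta+2)p+4a^2\beta(a^2+d_0^2)=0$ and the back-substitution $x_*^2=p-a^2-d_0^2$: taking the larger root does reproduce Eq.~(10) with $P_{\rm OTR}=1-2x_*/D$. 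You also correctly flag the split of $\{C(t)\ge C_{\rm th}\}$ into two intervals when $C_{\rm th}>C(t_2)$; the paper acknowledges this four-intersection case but then silently restricts to the two-intersection case, so your "main obstacle" is a real gap in the paper as well, not something you invented.

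Where your proposal still has a genuine hole is the monotonicity claim, and it is worth being precise because the paper's proof does not address it at all (it stops at "some simple mathematical manipulations"). Your expression
\begin{equation*}
h'(L)=2L\left[1+\frac{2(1-\beta d_0^2)}{\sqrt{\beta L^2(1-\beta d_0^2)+1}}\right]
\end{equation*}
is right, but the sign analysis you sketch only delivers "first increase then decrease" of $h$ when $1<\beta d_0^2<3/2$: at $L\to 0$ the bracket equals $3-2\beta d_0^2$, so you need $\beta d_0^2<3/2$ for $h'>0$ near the origin, and your interior zero $\beta L^2=\bigl(1-4(\beta d_0^2-1)^2\bigr)/(\beta d_0^2-1)$ is positive only under the same condition. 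If $\beta d_0^2\le 1$ the bracket is positive for all $L$, $h$ is increasing, and $P_{\rm OTR}$ is monotonically \emph{decreasing}; if $\beta d_0^2\ge 3/2$ it is monotonically increasing. So the "first decrease then increase" statement is regime-dependent, and your proof as written asserts a single sign change without isolating the regime in which it actually occurs. You should either state the condition $1<\beta d_0^2<3/2$ explicitly as a hypothesis, or reconcile the claim with the large-$L$ behavior via the interval-splitting case (where (10) itself ceases to be the correct expression for $P_{\rm OTR}$), since that is where the eventual increase of $P_{\rm OTR}$ physically comes from.
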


\begin{proof}
For $n=2$, the differential of $C(t)$ w.r.t. $t$ is
\begin{equation}
\begin{aligned}
\frac{{d{C(t)}}}{{dt}} = \frac{{\frac{{2SN{R_0}v(L - vt)}}{{{{\left( {{{\left( {L - vt} \right)}^2} + d_0^2} \right)}^2}}} - \frac{{2SN{R_0}t{v^2}}}{{{{\left( {d_0^2 + {{\left( {vt} \right)}^2}} \right)}^2}}}}}{{\left( {\log \left( {\frac{{SN{R_0}}}{{d_0^2 + {{\left( {vt} \right)}^2}}} + \frac{{SN{R_0}}}{{{{\left( {L - vt} \right)}^2} + d_0^2}}} \right) + 1} \right)}}
\end{aligned}
\end{equation}
Since the denominator is always $>0$, the root of $\frac{{d{C(t)}}}{{dt}} = 0$ is

\begin{equation} \label{t3}
\left\{
\begin{aligned}
{t_1} &= \tfrac{{\sqrt {L - \sqrt {2L({L^2} + 4d_0^2)}  - {L^2} - 4d_0^2} }}{{2v}},\\
{t_2} &= \tfrac{L}{{2v}},\\
{t_3} &= \tfrac{{\sqrt {L + \sqrt {2L({L^2} + 4d_0^2)}  - {L^2} - 4d_0^2} }}{{2v}},
\end{aligned} \right.
\end{equation}
where ${t_2}$ is a repeated root representing the local minimum point and $t_1$ and $t_3$ represent the rate peak point.
Let $\beta  = \frac{{{e^{C_{\rm th}}} - 1}}{{SNR_0}}$. The solutions of $C(t)={C_{\rm th}}$ w.r.t. $t$ are as follows. They are,
${T_1} = \tfrac{L + \Phi}{{2v}}$,
${T_2}= \tfrac{L - \Phi}{{2v}}$,
${T_3}= \tfrac{L - \Psi}{{2v}}$,
${T_4}= \tfrac{L + \Psi}{{2v}}$,
where $\Phi=\sqrt {\tfrac{{\left( {Q + \Theta} \right)}}{\beta }}$, $\Psi=\sqrt {\tfrac{{\left( {Q -\Theta} \right)}}{\beta }}$, $Q=\beta ({L^2} - 4d_0^2)$ and $\Theta=4\sqrt {\left(\beta {L^2}- {\beta ^2}{L^2}d_0^2 + 1 \right)} + 4$.
It's worth noting that within $\left[ { - D/2v,D/2v} \right]$, ${C_{\rm th}}$ has at most four intersect points with the curve of capacity of the two-antenna system. Particularly, when ${{C_{\rm th}}} \in \left( {C\left( {{t_1}} \right), + \infty } \right)$, there is no intersection point. When ${{C_{\rm th}}} \in \left( {C\left( {{t_2}} \right),C\left( {t_1} \right)} \right)$, there are four intersection points, i.e., $T_1$, $T_2$, $T_3$ and $T_4$. When ${{C_{\rm th}}} \in \left( {0,C\left( {{t_2}} \right)} \right)$, there are two intersection points, $T_1$ and $T_2$, and in this case, the outage time ratio is ${P_{\rm OTR}} =1- \frac{{{T_1} - {T_2}}}{T}$. Therefore,
$
{P_{\rm OTR}} = 1-{\sqrt {\tfrac{{\left( {Q + \Theta} \right)}}{\beta }} }/{D}.
$
With some simple mathematical manipulations, one can arrive at Lemma 2.
\end{proof}


\begin{figure*}[!ht]
\begin{equation}\label{int}
\begin{aligned}
S_{\textrm{two}}= \int\limits_{ - D/2v}^{D/2v} {\log \left( {1 + SN{R_{B,1}} + SN{R_{B,2}}} \right)} dt = \int\limits_{ - D/2v}^{D/2v} {\log } \left( {1 + \tfrac{{{P_t}{G_l}{\lambda ^2}}}{{{{\left( {4\pi \sqrt {{{\left( {vt} \right)}^2} + {d_0}^2} } \right)}^2}{N_0}B}} + {\rm{ }}\tfrac{{{P_t}{G_l}{\lambda ^2}}}{{{{\left( {4\pi \sqrt {{{\left( {vt - L} \right)}^2} + {d_0}^2} } \right)}^2}{N_0}B}}} \right)dt
\end{aligned}
\end{equation}
\hrule
\end{figure*}

For $N>2$, assuming that $N$ is an even number, then we have a lemma as follows.
\begin{lemma}
For $N>2$, when $L$ is not very large, the service amount of the fixed-interval strategy is not lower than that of the equidistant strategy and the outage time ratio of the fixed-interval strategy is not higher than that of the equidistant strategy.
\end{lemma}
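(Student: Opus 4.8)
The plan is to reduce both metrics to the behaviour of the \emph{aggregate} received SNR $\Sigma(t):=\sum_{n=1}^{N}SNR_{B,n}(t)$, since $\log(1+\cdot)$ is strictly increasing: we have $S=\int_{-D/2v}^{D/2v}\log(1+\Sigma(t))\,dt$, and the outage region is exactly the sublevel set $\{t:\Sigma(t)<e^{C_{\rm th}}-1\}$, which only shrinks as $\Sigma$ grows. Writing $x_n$ for the along-track offset of the $n$-th antenna --- the equidistant offsets are $\tfrac{(n-1)L}{N-1}$, uniformly spread over $[0,L]$, while the fixed-interval offsets cluster near $0$ and near $L$ --- we have $\Sigma(t)=SNR_0\sum_n\psi_{vt}(x_n)$ with $\psi_s(x):=\big((x-s)^2+d_0^2\big)^{-1}$. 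So the whole problem becomes: compare $\sum_n\psi_{vt}(x_n^{(Fix)})$ with $\sum_n\psi_{vt}(x_n^{(Equ)})$ as $t$ ranges over $[-D/2v,D/2v]$.

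The key structural fact I would establish is that, in the idealisation $\delta\to 0$ (equivalently when $\delta$ takes its smallest admissible half-wavelength value, negligible against $L$), the fixed-interval offset multiset $\{0^{N/2},L^{N/2}\}$ \emph{majorizes} the equidistant one $\{\tfrac{(n-1)L}{N-1}\}_{n=1}^{N}$: both have sum $\tfrac{NL}{2}$ and the partial-sum dominance is an elementary quadratic check. Since $\sum_n f(\cdot)$ is Schur-convex exactly when $f$ is convex, this yields, \emph{exactly}, $\Sigma^{(Fix)}(t)\ge\Sigma^{(Equ)}(t)$ for every $t$ on which $\psi_{vt}$ is convex on $[0,L]$, i.e.\ for $vt\notin(-d_0/\sqrt3,\,L+d_0/\sqrt3)$ --- the ``tails'' of the coverage window, where the train is well away from the BS. On the complementary central band, a second-order expansion of $\sum_n\psi_{vt}(x_n)$ about the train centre (reliable precisely because $L$ is not large) shows the sign of the difference is that of $\psi_{vt}''(\tfrac{L}{2})$, which is positive except on the short window $|vt-\tfrac{L}{2}|\le d_0/\sqrt3$. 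In other words: fixed-interval has the larger aggregate SNR everywhere except on a central window of length $O(d_0/v)$ that does not even depend on $L$.

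The outage statement then follows almost immediately: $\Sigma$ is largest while the train is abreast of the BS and decays toward $t=\pm D/2v$, so for any threshold $C_{\rm th}$ below the rate attained while abreast of the BS the outage region of \emph{either} strategy lies in those tails, on which $\Sigma^{(Fix)}\ge\Sigma^{(Equ)}$ pointwise; hence the fixed-interval outage region is contained in the equidistant one and $P_{\rm OTR}^{(Fix)}\le P_{\rm OTR}^{(Equ)}$. For the service amount I would write $S^{(Fix)}-S^{(Equ)}=\tfrac{1}{\ln 2}\int_{-D/2v}^{D/2v}\tfrac{\Sigma^{(Fix)}(t)-\Sigma^{(Equ)}(t)}{1+\widetilde\Sigma(t)}\,dt$ in mean-value form and note that the integrand is negative only on the central window --- where the denominator $1+\widetilde\Sigma$ is \emph{largest}, since all antennas are then close to the BS --- and nonnegative over the whole of the tails, where the denominator is smallest, so the positive part is weighted far more heavily. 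Bounding the central contribution by (window length)$\,\times\,\max|\Sigma^{(Fix)}-\Sigma^{(Equ)}|/\min(1+\Sigma)$ and the tail contribution from below, one gets $S^{(Fix)}\ge S^{(Equ)}$ whenever $L$ is small relative to the coverage span $D$ (with the coverage much larger than $d_0$, which is automatic physically) --- this is the precise content of ``$L$ is not very large.''

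The hard part will be this balancing estimate: it is genuinely a joint smallness condition on $L$, $D$ and $d_0$ rather than a condition on $L$ alone, and the delicate points are (i) to lower-bound $\Sigma^{(Fix)}-\Sigma^{(Equ)}$ uniformly over the tails, not merely near the SNR peak, and (ii) to control the ratio $\Sigma^{(Fix)}/\Sigma^{(Equ)}$ on the central window so as to bound the loss there --- both of which are what force the explicit threshold on $L$. The degenerate limits $\delta\to 0$ and $L\to 0$ turn both inequalities into equalities, consistent with the ``not lower / not higher'' phrasing; removing the $\delta\to 0$ idealisation is routine, since $\delta$ is of half-wavelength order and hence negligible against every other length in the model.
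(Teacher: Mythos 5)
Your route is genuinely different from the paper's. The paper pairs the antennas symmetrically --- the $n$-th with the $(N+1-n)$-th --- notes that the intra-pair separation $L-2\delta(m-1)$ under the fixed-interval strategy is at least the separation $L-2\tfrac{L}{N-1}(m-1)$ under the equidistant one (since $\delta\le\tfrac{L}{N-1}$), and then invokes Lemmas 1 and 2 (service increasing and outage decreasing in the separation of a \emph{two-antenna} pair, for separations that are not large) group by group. You instead compare the two offset multisets wholesale: majorization of $\{0^{N/2},L^{N/2}\}$ over $\{\tfrac{(n-1)L}{N-1}\}$ together with Schur-convexity of $x\mapsto((x-vt)^2+d_0^2)^{-1}$ gives pointwise dominance of the aggregate SNR on the tails $vt\notin(-d_0/\sqrt3,\,L+d_0/\sqrt3)$, and you treat the central band separately. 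This buys something real: the paper's groupwise step tacitly treats the capacity $\log\left(1+\sum_n SNR_{B,n}\right)$ as if it decomposed additively over pairs, which it does not, whereas your argument works with the true aggregate SNR and correctly isolates the one region --- the train abreast of the BS --- where the equidistant strategy genuinely wins pointwise, consistent with the paper's Figure 5 where the equidistant strategy has the higher peak rate. What the paper's approach buys is a clean reduction to results already proved for $N=2$.

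Two caveats keep your proposal from being a complete proof. First, the balancing estimate you defer --- that the tail gain in the service integral outweighs the central-window loss --- is precisely the content of the hypothesis ``$L$ is not very large'' and is never carried out; moreover the second-order expansion you use to localize the loss to $|vt-L/2|\le d_0/\sqrt3$ requires the antenna spread to be small against the curvature scale $d_0$, which fails for the paper's own parameters ($L=200$\,m versus $d_0=50$\,m), so the ``window independent of $L$'' claim is an artifact of that expansion. Second, the outage containment needs $C_{\rm th}$ to sit below the central dip of the fixed-interval rate curve (which is double-peaked); otherwise the fixed-interval outage set acquires a central component your tail argument does not cover. Neither caveat is worse than the gap in the paper's own proof, but they are where the real work would have to be done.
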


\begin{proof}
All $N$ antennas can be divided into $N/2$ groups and each group includes two antennas which are expressed as the $n$-th antenna and the $(N+1-n)-$th antenna, where $n=1,...,N/2$. For example, the first antenna and the last one is in group 1, and the second antenna and the $N-1$-th antenna is in group 2, as shown in Figure \ref{four_antenna}.
With the grouping method of the antennas, for the group with the same index $m$ ($m=1,2,...,N/2$), the distance of between two antennas in equidistant strategy is
$
d_m^{Equ}=L-2\frac{L}{{N - 1}}\left(m-1\right)
$
and that in fixed-interval strategy is
$
d_m^{Fix}=L-2\delta\left(m-1\right).
$
Because $\frac{L}{{N - 1}} \ge \delta$,  $d_m^{Equ} \le d_m^{Fix}$, we have that the distance between two antennas in each group of the fixed-interval strategy is no less than that in the corresponding group of the  equidistant strategy.  With lemma 1 and lemma 2, when $L$ is not large, for each corresponding group, it can be inferred that the service amount and the  $P_{\rm OTR}$ in fixed-interval strategy are respectively higher and lower than those in the equal-distant strategy. Therefore, Lemma 3 is proved.
\end{proof}


\newtheorem{remark}{\textbf{Remark}}
 \begin{remark}
When $N=N_{\textrm{max}}$, the deployment result of the fixed-interval strategy is the same with that of the equidistant strategy, where ${N_{{\rm{max}}}} = \frac{L}{\delta } + 1\left( \delta  \ge  \frac{{{L_{wave}}}}{2}\right)$, $L_{\textrm{wave}}$ is the wavelength of carrier. That is, when $N=N_{\textrm{max}}$, both strategies have the same system performance.
\end{remark}

\section{RESULTS AND DISCUSSION}
In this section, we provide some simulation results to discuss the system performance of the proposed two antenna deployment strategies in terms of instantaneous capacity, service mount and outage time ratio. Detailed simulation parameters are shown in Table \ref{parameter}.

\begin{small}
\begin{table}[h]\normalsize
\centering
\caption{Simulation parameter}
\begin{tabular}{  l c }
\hline
Parameter & Value \\
\hline
Train moving Speed & 100m/s \\
Carrier frequency & 2GHz \\
Bandwidth of each subcarrier & 15kHz \\
Noise density & 145dBm/Hz \\
Maximum received SNR & 5dB \\
Minimum distance ($d_0$) & 50m \\
Length of train ($L$) & 200m\\
Time period & $\left[-6,6\right]$\\
\hline
\label{parameter}
\end{tabular}
\end{table}
\end{small}

Figure \ref{d_L_t} shows the system performance versus the distance between two neighboring antennas for $N=2$ case, where Figure \ref{d_L_t}(a) and (b) plot the service amount and outage time ratio verses antenna distance for equidistant strategy, respectively. The antenna distance is change from $0.1m$ to $200m$ for the equidistant strategy and the fix interval $\delta$ is set to be 1m for the fixed interval strategy. $C_{\textrm{th}}$ is set as 0.15 bit/s/Hz which is the average capacity of single antenna system during the given time period.  It can be observed that with the increment of the antenna distance, the service increases and the outage time ratio decreases. This is because the derivative of function $\log(x)$ declines as $x$ increasing. Particularly, when the two antennas are deployed close to each other, they pass the original point almost at the same time. As a result, $SNR_{B,1}$ and $SNR_{B,2}$ achieves their maximum values almost at the same time. In this case, $\sum\nolimits_n {SN{R_{B,n}}}$ contribute relatively little to improve total service amount. When the two antennas are deployed very far from each other, they achieve their maximum SNR at very different time. Thus, the integration of instantaneous channel capacity is improved greatly. Moreover, as the two antennas achieve their maximum received SNR at different time, thus the outage time ratio is decreased.

Figure \ref{L1000} plots an extreme case with two antennas, where the length of the train is sufficiently large, so that the distance of antennas  can be increased much larger. It is observed that the service amount firstly increases and then decrease with the increment of antenna distance, which is consist the theoretical results obtained in Lemma 1.
\begin{figure}
\centering
\includegraphics[width=0.5\textwidth]{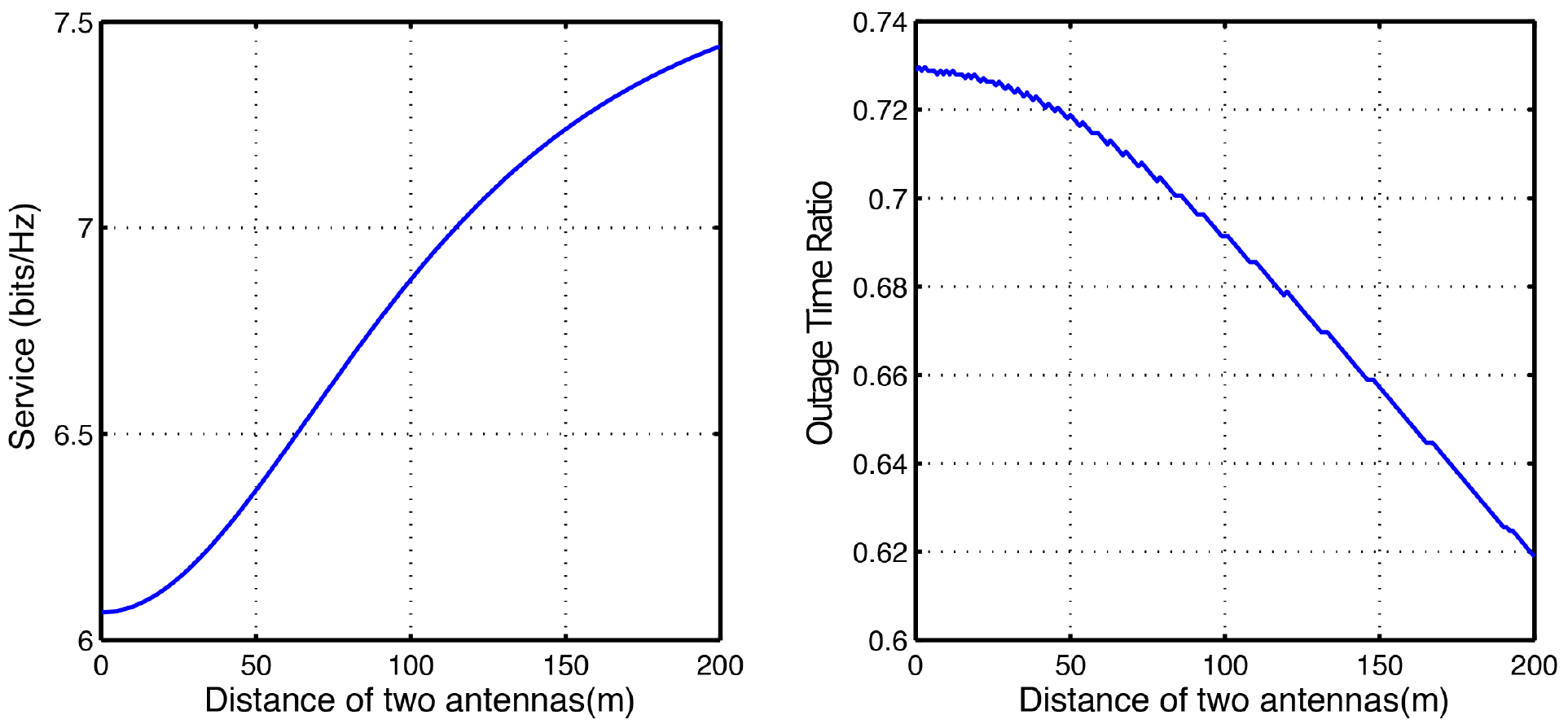}\picspace
\caption{The service and outage time ratio versus the distance between two antennas for $N=2$ case.}
\label{d_L_t}
\end{figure}\picspace

\begin{figure}
\centering
\includegraphics[width=0.25\textwidth]{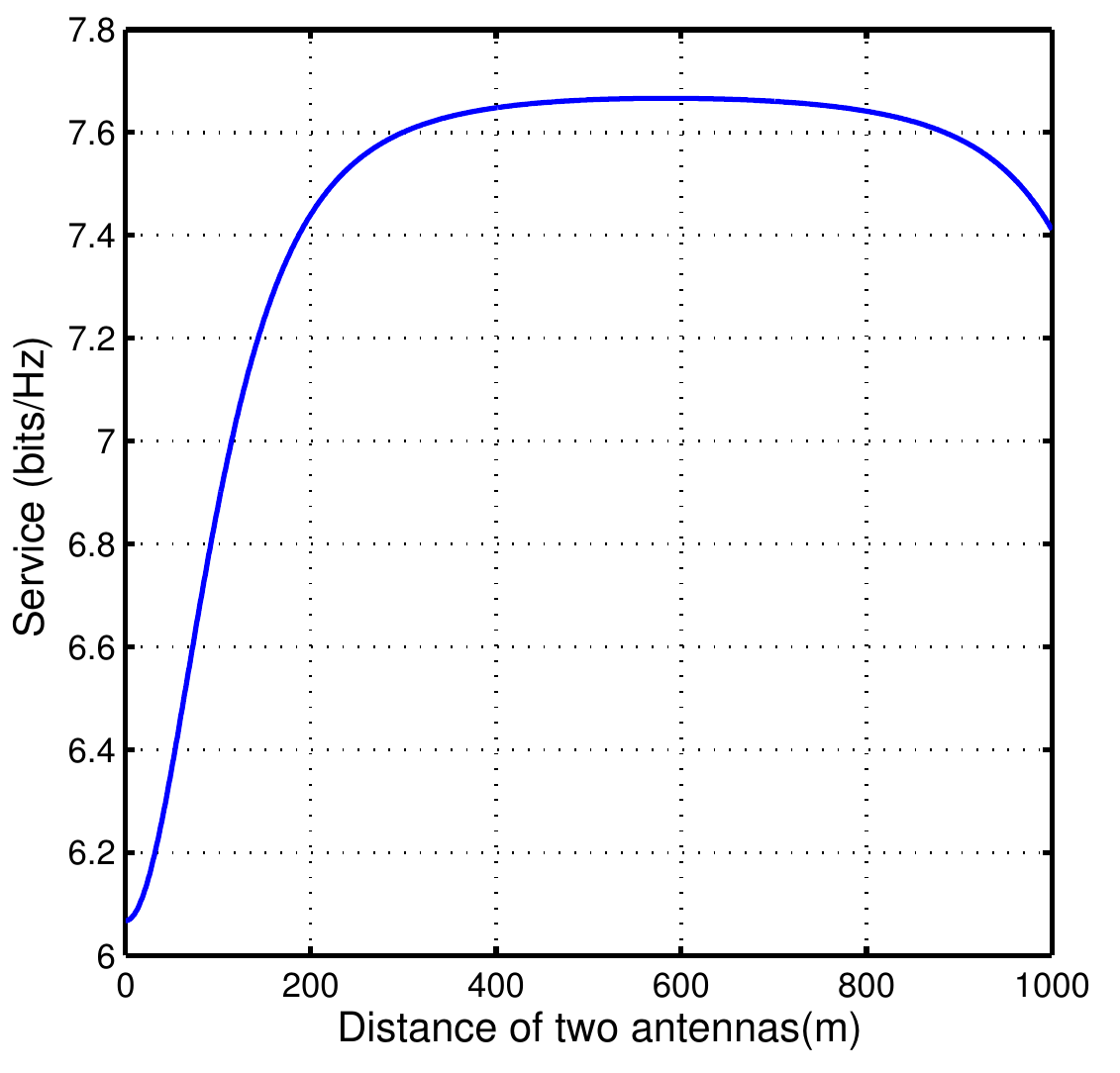}\picspace
\caption{The service amount versus the distance between two antennas for $N=2$ case.}
\label{L1000}
\end{figure}\picspace

Figure \ref{fix} (a) and (b) show the instantaneous channel capacity of equidistant strategy and fixed-interval strategy where $\delta$ is set as 1m with the number of antennas being 10, 50, 100, 200, respectively. It is observed that, with the increment of the number of antennas, the instantaneous channel capacity of both two strategies can be increased at any time. But the capacity gain is relatively large when the train is near the serving BS and relatively small when the train is far way from the serving BS. Besides, the two strategies show very different instantaneous channel capacity behavior with the increment of the number of antennas. That is, there are two rate peaks in the fixed-interval strategy, but basically only one peak in the equidistant strategy when the number of antennas is 10, 50 and 100. It is also shown that in fixed-interval strategy, the distance between two rate peaks decreases and finally the two rate peaks are merged into one rate peak when the number of antennas is relatively large. Moreover, when $N=200$, the two strategies show very similar performance behavior ,since in this case the deployment of two strategies are the same.

\begin{figure}
\centering
\includegraphics[width=0.5\textwidth]{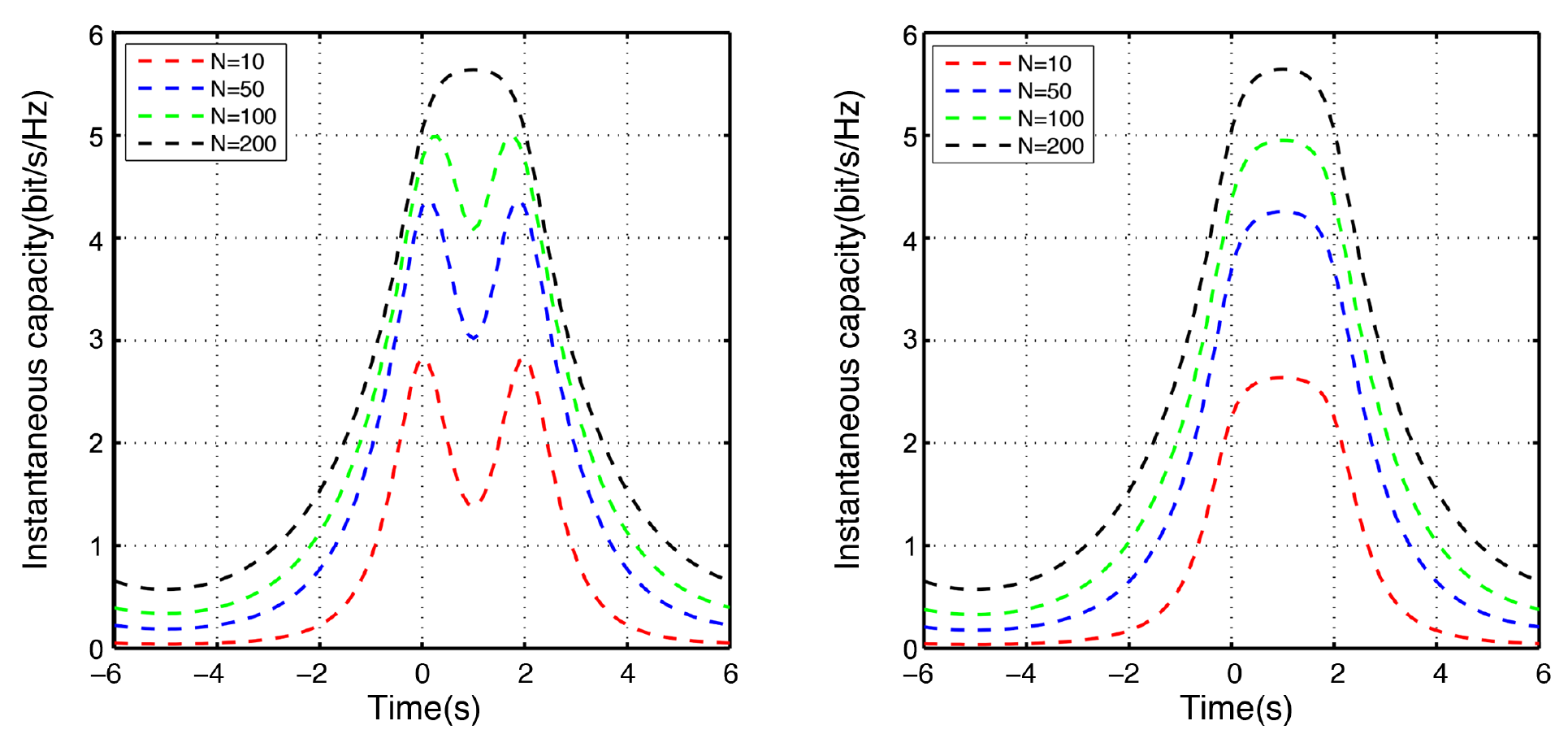}\picspace
\caption{The instantaneous channel capacity of (i) fixed-interval strategy ($\delta=1m$) and (ii) equidistant strategy.}
\label{fix}
\end{figure}

In order to fully compare the instantaneous channel capacity of the two strategies, Figure \ref{A600} changes $\delta$ to be 0.15m, which is similar to the wavelength of the carrier. In this case, 600 antennas are deployed on the HST, which can be considered as a massive antenna case. It can be observed that the fixed-interval strategy contributes a larger capacity gain than the equidistant strategy in most time, but it has lower peak rate than the equidistant one. As it is more expected to experience a more smooth wireless communication rather than very high peak rate, it is preferred to adopt the fixed-interval strategy in such a massive antenna case.

\begin{figure}
\centering
\includegraphics[width=0.25\textwidth]{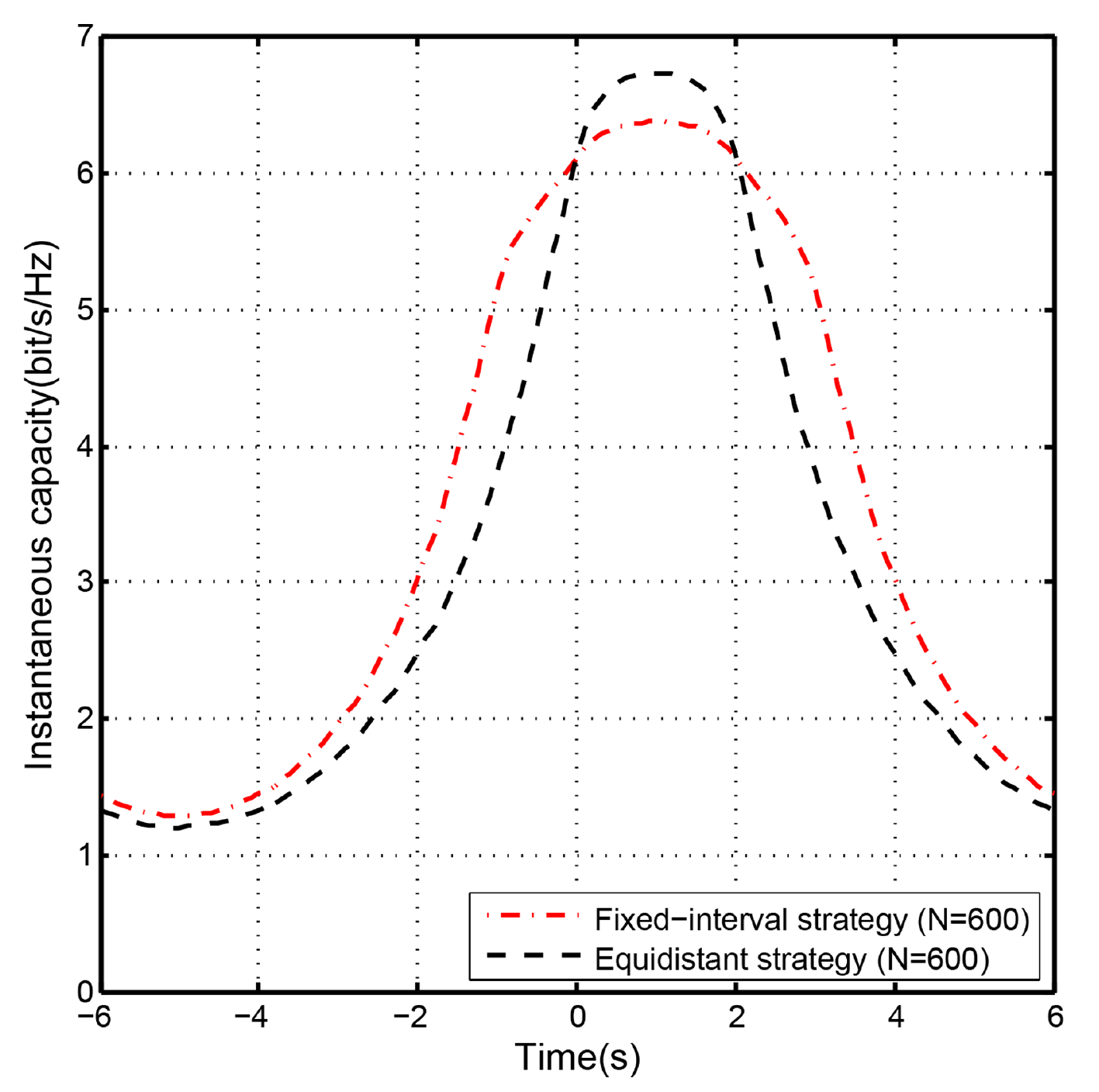}\picspace
\caption{The instantaneous channel capacity of fixed-interval strategy ($\delta=0.15m$) and equidistant strategy for $N=600$ case.}
\label{A600}
\end{figure}\picspace

For the ultimate case, where $N$ is increased to be $N_{\textrm{max}}$, Figure \ref{com_f_e} plots its service amount w.r.t the numbers of antennas of two strategies when the fix interval $\delta$ is set as 0.075m which is the half wavelength of the carrier. In such an extreme scenario, $N_{\textrm{max}}=2666$. It can be seen that, the two considered strategies have the overlapped performance at $N=2$ and $N=2666$, which is as expected. Moreover, it can be seen that the service amount of two proposed strategies is firstly increasing and then decreasing versus the number of antennas and the maximum difference comes about at $N=200$ in this scenario. The reason is that for a relatively large $N$, the average distance of two antennas in equidistant strategy is  similar to that of that fixed-interval strategy. Therefore, when for a relatively large $N$, with the increment of $N$, the system performance difference between the two strategies is decreased.

\begin{figure}
\centering
\includegraphics[width=0.25\textwidth]{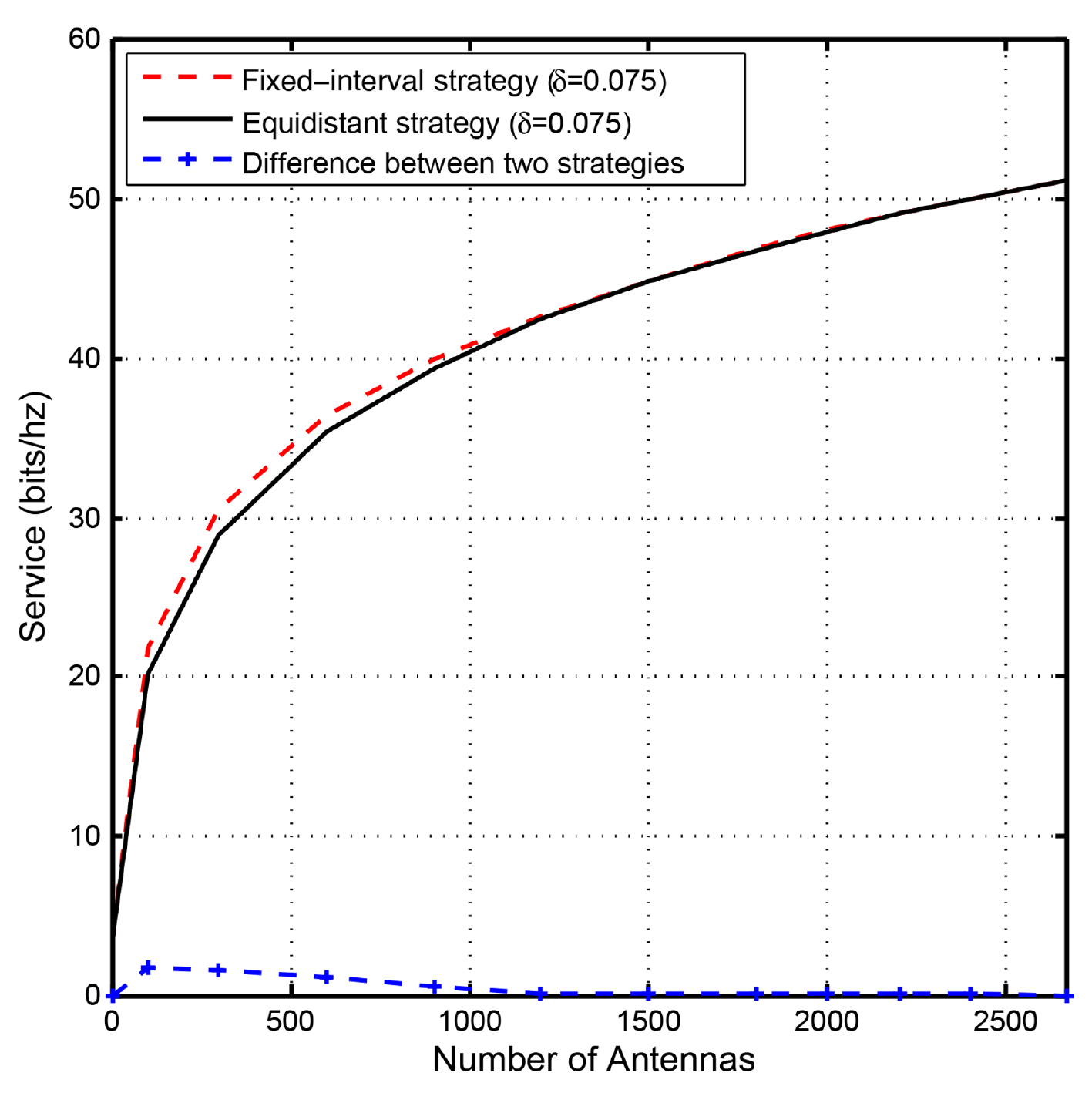}\picspace
\caption{The service amount of two strategies versus the number of antennas ($\delta=0.075m$). }
\label{com_f_e}
\end{figure}\picspace

To fully compare the two strategies, Figure \ref{end} plots the service and outage time ratio of the two strategies, where $\delta=1m$. The result shows that with the increment of the number of antennas, the service of both strategies increases and outage time ratio of both strategies decreases. Nevertheless,  the fixed-interval strategy show better performance than the equidistant strategy.
Based on the numerical results, we presented our conjecture above for the antenna number is relatively large.

\begin{figure}
\centering
\includegraphics[width=0.5\textwidth]{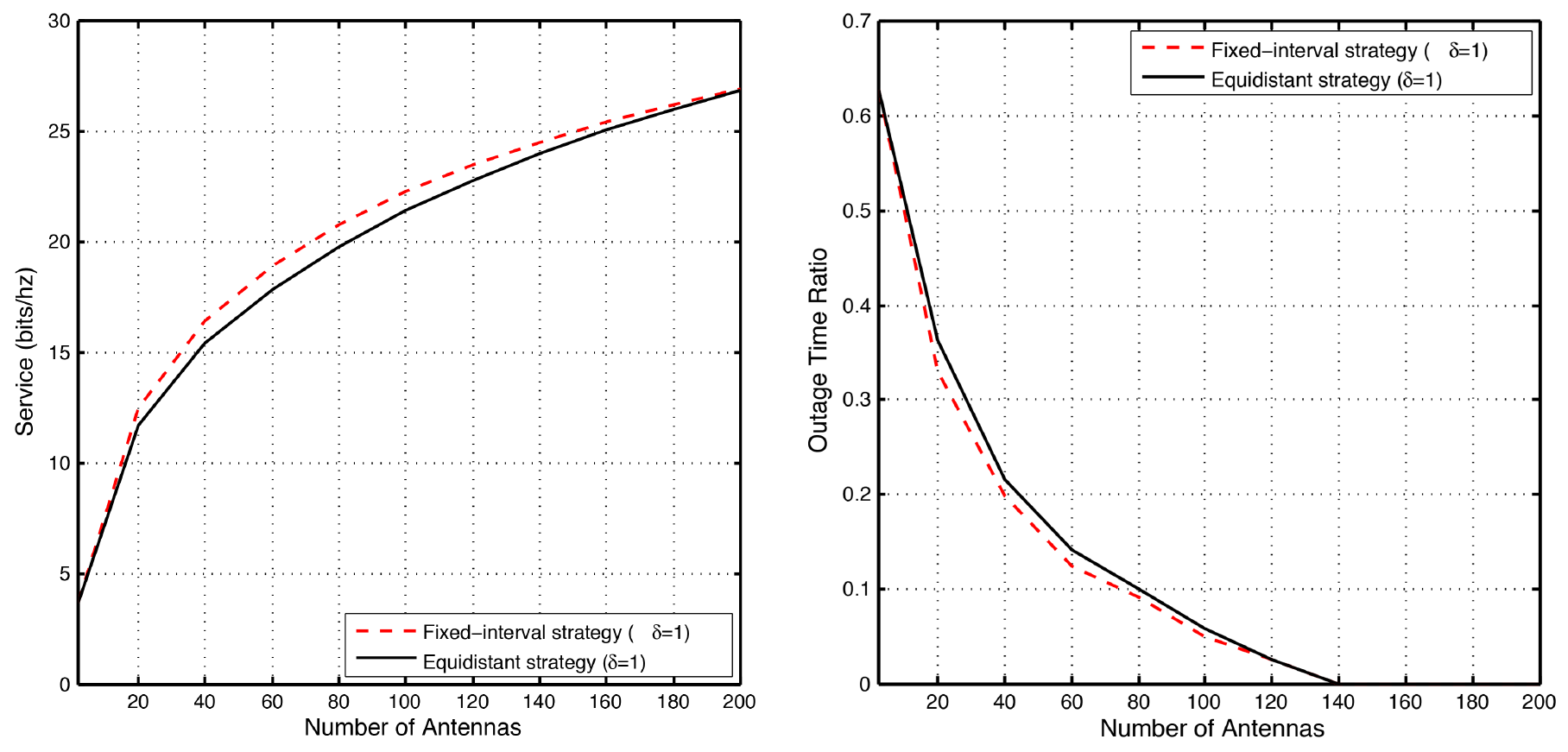}\picspace
\caption{The comparison between two proposed strategies based on (i) service and (ii) outage time ratio ($\delta=1m$).}
\label{end}
\end{figure}\picspace

Figure \ref{fix_com} plots the service amount and  outage time ratio of fixed-interval strategy with 20 antennas versus  $\delta$. The service amount decreases and the outage time ratio increases as $\delta$ increasing. This is because when $\delta$ increases the distance between the two sets of antennas in the fixed-interval strategy decreases. Therefore, a smaller $\delta$ is better for fixed-interval strategy. Thus, it needs to take care it in practice.

\begin{figure}
\centering
\includegraphics[width=0.5\textwidth]{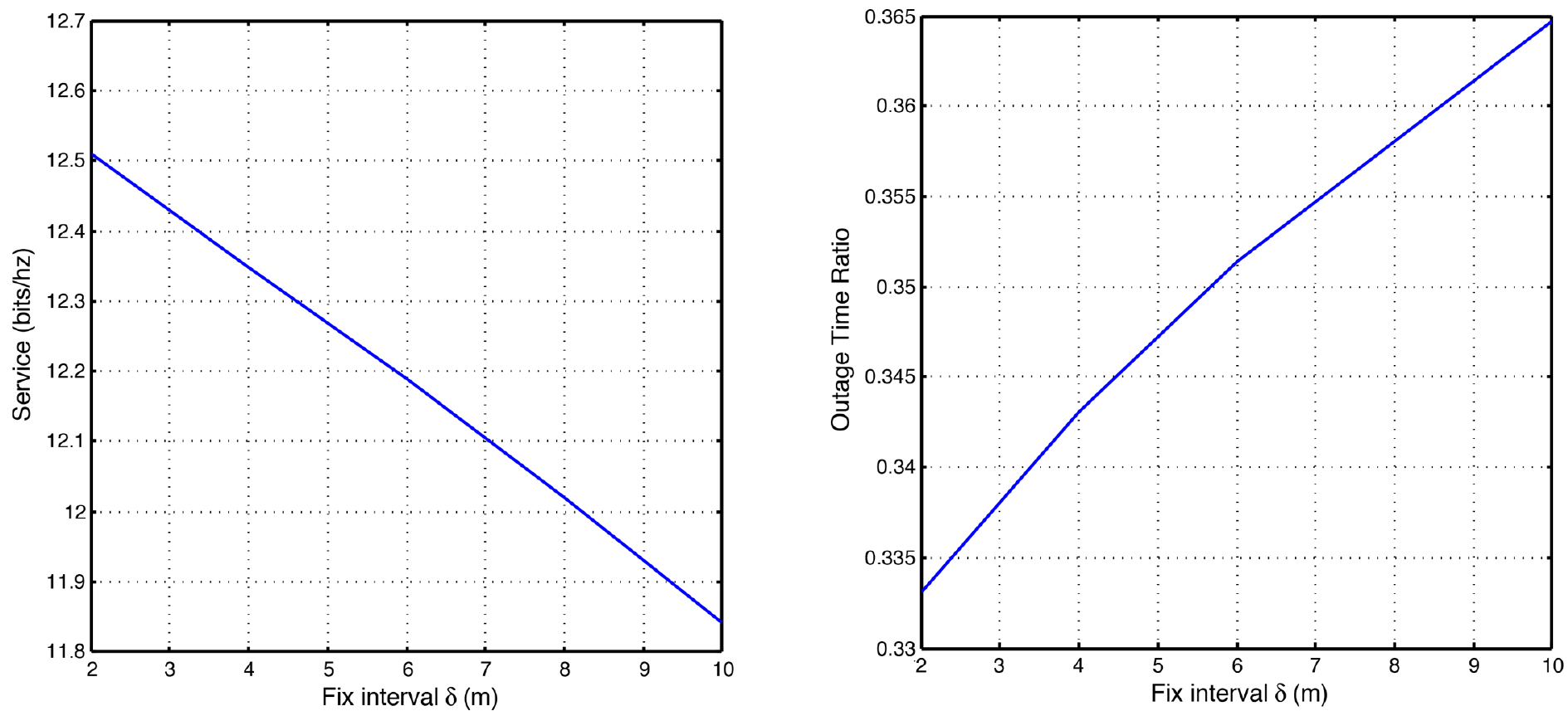}\picspace
\caption{The service amount and outage time ratio of fixed-interval strategy with different $\delta$ for $N=20$ case.}
\label{fix_com}
\end{figure}\picspace

\section{Conclusion}

This paper considered two multi-antenna deployment strategies, i.e., the equidistant strategy and the fixed-interval strategy. We discussed and compared the two strategies in terms of instantaneous channel capacity, service amount and outage time ratio. Numerical results show that, for two-antenna case, by increasing the distance of neighboring antennas in a reasonable region, the system performance can be enhanced and it is also shown that the two strategies have much difference performance behavior in terms of instantaneous channel capacity and the fixed-interval strategy  may achieve better performance that the equidistant one in terms of service amount and outage time ratio for the antenna number is much large.

%


%

\end{document}